\documentclass[conference]{IEEEtran}
\usepackage{latexsym, graphicx, pstricks, pst-node, url, subfig, graphics, epic, bm}
\usepackage{amssymb}
\usepackage{amsthm}
\usepackage{amsmath}
\usepackage{setspace}
\newtheorem{fact}{Fact}
\newtheorem{definition}{Definition}

\newtheorem{lemma}{Lemma}
\newtheorem{theorem}{Theorem}

\newtheorem{claim}{Claim}

\begin{document}

\title{Computing the Ball Size of Frequency Permutations under Chebyshev Distance}

\author{\IEEEauthorblockN{Min-Zheng Shieh}
\IEEEauthorblockA{Department of Computer Science\\National Chiao Tung University\\
1001 University Road\\
Hsinchu City, Taiwan\\
Email: mzhsieh@csie.nctu.edu.tw}
\and
\IEEEauthorblockN{Shi-Chun Tsai}
\IEEEauthorblockA{Department of Computer Science\\National Chiao Tung University\\
1001 University Road\\
Hsinchu City, Taiwan\\
Email: sctsai@csie.nctu.edu.tw}
}

\date{}

\maketitle

\begin{abstract}
Let $S_n^\lambda$ be the set of all permutations over the multiset $\{\overbrace{1,\dots,1}^{\lambda},\dots,\overbrace{m,\dots,m}^\lambda\}$ where $n=m\lambda$. A frequency permutation array (FPA) of minimum distance $d$ is a subset of $S_n^\lambda$ in which every two elements have distance at least $d$. FPAs have many applications related to error correcting codes. In coding theory, the Gilbert-Varshamov bound and the sphere-packing bound are derived from the size of balls of certain radii. We propose two efficient algorithms that compute the ball size  of frequency permutations under Chebyshev distance. 
Both methods extend previous known results.
The first one runs in $O\left({2d\lambda \choose d\lambda}^{2.376}\log n\right)$ time and $O\left({2d\lambda \choose d\lambda}^{2}\right)$ space.  The second one runs in $O\left({2d\lambda \choose d\lambda}{d\lambda+\lambda\choose \lambda}\frac{n}{\lambda}\right)$ time and $O\left({2d\lambda \choose d\lambda}\right)$ space.
For small constants $\lambda$ and $d$, both are efficient in time and use constant storage space.
\end{abstract}

\section{Introduction}

{\em Frequency permutation arrays} (FPAs) of frequency $\lambda$ and length $n$ are sets of permutations over the multiset $\{\overbrace{1,\dots,1}^{\lambda},\dots,\overbrace{m,\dots,m}^\lambda\}$ where $n=m\lambda$. In particular, FPAs of frequency $\lambda=1$ are called {\em permutation arrays} (PAs). FPA is a special case of Slepian's codes \cite{Slep65} for permutation modulation. 
Many applications of FPAs in various areas, such as power line communication (see \cite{Shum02}, \cite{VH00}, \cite{VHW00} and \cite{Vinck00}), multi-level flash memories (see \cite{Jiang1} and \cite{Jiang2,TS10}) and computer security (see \cite{ST10}), are found recently. 
In many applications, we use FPAs as error correcting codes. It is well-known that the capability against errors of a code is mainly determined by its minimum distance. Similar to traditional codes, the minimum distance of an FPA $C$ is $d$ under metric $\delta$ if $\min_{\rho,\pi\in C; \rho\neq\pi}\delta(\rho,\pi)=d$. A $(\lambda,n,d)$-FPA under some metric $\delta$ is a FPA of frequency $\lambda$ and length $n$ which has minimum distance at least $d$ under $\delta$.  A $(\lambda,n,d)$-FPA is often considered to be better if it has larger cardinality. When we evaluate the quality of certain design of $(\lambda,n,d)$-FPA, we often compare it with the maximum cardinality of $(\lambda,n,d)$-FPAs. Generally speaking, computing the maximum size of codes of certain parameters is hard. In coding theory, the Gilbert-Varshamov bound and the sphere-packing bound are famous lower and upper bounds on the code size, respectively. They are derived from the size of balls of certain radii. We focus on the  efficiency of computing the ball size in this paper.

Shieh and Tsai \cite{ST10} showed that the cardinality of $d$-radius balls can be obtained by 
computing the permanent of a matrix. It is $\#P$-complete to compute the permanent of a general matrix.  
However, the matrix for estimating the ball size of FPAs has a special structure.
Kl{\o}ve \cite{Klove09} used the property and proposed a method to efficiently compute the cardinality of balls of radius $1$ by solving recurrence relations. Meanwhile, based on Schwartz's result \cite{Sch09}, one can compute the size of $d$-radius balls in $O\left({2d \choose d}^{2.376}\log n\right)$ time for $\lambda=1$. 

In this paper, we give two algorithms to compute the cardinality of a $d$-radius ball under Chebyshev distance for $d>1$ and $\lambda >1$. The first one runs in $O\left({2d\lambda \choose d\lambda}^{2.376}\log n\right)$ time and $O\left({2d\lambda \choose d\lambda}^{2}\right)$ space, and the second one in $O\left({2d\lambda \choose d\lambda}{d\lambda+\lambda\choose \lambda}\frac{n}{\lambda}\right)$ time and $O\left({2d\lambda \choose d\lambda}\right)$ space. These algorithms are generalization of Schwartz's result \cite{Sch09}. They are efficient in time and space when $d$ and $\lambda$ are small fixed  constants. 

The rest of the paper is organized as follows. In section 2, we define some notations. In section 3, we introduce a recursive algorithm {\sc EnumV} for enumerating permutations in a $d$-radius ball. Then, based on the property of {\sc EnumV}, we give two methods to obtain the ball size efficiently in section 4. We compare previous results with ours in section 5. Then, we conclude this paper briefly.

\section{Notations}

We set $n=m \lambda$ throughout the work unless stated otherwise. 
For positive integers $a$ and $b$ with $ a< b$, $[a]$ represents the set $\{1,\ldots,a\}$ and
$[a, b]$ represents $\{ a, a+1,\dots, b-1, b\}$. For convenience, $(-\infty,b]$ represents the set of integers at most $b$.
The Chebyshev distance of  two $k$-dimensional vectors $\bm{x}$ and $\bm{y}$ is defined as $d_{\max}(\bm{x},\bm{y})=\max_{i\in [k]} |x_i -y_i|$, where $x_i$ and $y_i$ are the $i$-th entries of $\bm{x}$ and $\bm{y}$ respectively. For permutations $\bm{x}$ and $\bm{y}$, they are said to be $d$-close to each other under metric $\delta(\cdot,\cdot)$ if $\delta(\bm{x},\bm{y})\le d$. 
We use $S_n^{\lambda}$ to denote the set of all frequency permutations with each symbol appearing $\lambda$ times.
The identity frequency permutation $\bm{e}$ in $S_{n}^\lambda$ is $(\overbrace{1,\dots,1}^\lambda,\dots,\overbrace{m,\dots,m}^\lambda)$, i.e.,
 the $i$-th entry of $\bm{e}$ is $e_i=\left\lceil\frac{i}{\lambda}\right\rceil$.
   A {\em partial frequency permutation} can be derived from a frequency permutation  in
$S_{n}^\lambda$ with some entries replaced with $*$.  The symbol  $*$ does not contribute to the
distance.  I.e., the distance between two $k$-dimensional partial frequency permutations,  $\bm{x}$ and $\bm{y}$, is defined as $d_{\max}(\bm{x},\bm{y})=\max_{i\in [k], x_i\neq *, y_i\neq *} |x_i -y_i|$.

Under Chebyshev distance, a ball of radius $r$ centered at $\pi$ is defined as $B(r,\pi)=\{\rho:d_\infty(\rho,\pi)\le r\}$. We can obtain $B(r,\pi)$ from $B(r,\bm{e})$ by rearranging the indices of the entries, therefore $|B(r,\pi)|=|B(r,\bm{e})|$ for $\pi\in S_n^\lambda$. Let $V_\infty(\lambda,n,d)$ be the size of a ball of radius $d$ in $S_n^\lambda$ under Chebyshev distance.


\section{Enumerate $d$-close permutations}

In this section, we give a recursive algorithm to enumerate all frequency permutations in $B(d,\bm{e})$. First, we investigate $\bm{e}$ closely. 
\[\begin{array}{|c||c|c|c|c|c|c|c|c|c|c|c|}\hline
i&\cdots&k\lambda-\lambda&k\lambda-\lambda+1&\cdots&k\lambda&k\lambda+1&\cdots\\\hline
\bm{e}_i&\cdots&k-1&k&\cdots&k&k+1&\cdots\\\hline
\end{array}\]
Observe that  symbol $k$ appears at the $(k\lambda-\lambda+1)$-th, $\dots$, $(k\lambda)$-th positions in $\bm{e}$.  Therefore, $\pi$ is $d$-close to $\bm{e}$ if and only if $\pi_{k\lambda-\lambda+1},\dots,\pi_{k\lambda}\in[k-d,k+d]$. 
(Note that we simply ignore the non-positive  values when $k \le d$.)
In other words,  $d_{\max}(\bm{e},\pi)\le d$ if and only if symbol $k$ only appear in the $(k\lambda-d\lambda-\lambda+1)$-th, $\dots$, $(k\lambda+d\lambda)$-th positions of $\pi$. This observation leads us  to define the shift operator $\oplus$.

\begin{definition} 
For an integer set $S$ and an integer $z$, define $S\oplus z=\{s+z:s\in S\}$. 
\end{definition}

\begin{fact}\label{shift}
Suppose that $\pi$ is $d$-close to $\bm{e}$, then $\pi_i=k$ implies $i\in[-d\lambda+1,d\lambda+\lambda]\oplus(k\lambda-\lambda)$. 
\end{fact}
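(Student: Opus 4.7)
The plan is to unfold definitions and match intervals. First I would invoke the hypothesis $d_{\max}(\bm{e},\pi)\le d$ coordinatewise, so that $|e_i-\pi_i|\le d$ for every $i$; under the additional assumption $\pi_i=k$ this yields $e_i\in[k-d,k+d]$.

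Next I would apply the explicit formula $e_i=\lceil i/\lambda\rceil$ stated immediately after the tabular display in the excerpt. Inverting the ceiling, $\lceil i/\lambda\rceil=j$ is equivalent to $(j-1)\lambda+1\le i\le j\lambda$, so the containment $e_i\in[k-d,k+d]$ translates into $(k-d-1)\lambda+1\le i\le (k+d)\lambda$.

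Finally I would rewrite the endpoints to line up with the stated set: $(k-d-1)\lambda+1=(-d\lambda+1)+(k\lambda-\lambda)$ and $(k+d)\lambda=(d\lambda+\lambda)+(k\lambda-\lambda)$, so $i\in[-d\lambda+1,\,d\lambda+\lambda]\oplus(k\lambda-\lambda)$, which is exactly the claimed conclusion. I do not expect any real obstacle: the fact is essentially a bookkeeping restatement of the paragraph that precedes it, and the only minor care needed is handling the ceiling inversion on the correct half-open interval so that the additive shift by $k\lambda-\lambda$ produces the endpoints $-d\lambda+1$ and $d\lambda+\lambda$ cleanly. (The parenthetical remark about ignoring non-positive values when $k\le d$ causes no issue, since the statement only constrains the position $i$, which remains a legitimate index in $[1,n]$.)
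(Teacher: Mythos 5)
Your proposal is correct and matches the paper's reasoning: the paper does not give a formal proof of Fact~\ref{shift}, but derives it from exactly the observation you formalize, namely that symbol $k$ occupies positions $(k\lambda-\lambda+1),\dots,k\lambda$ in $\bm{e}$ (equivalently $e_i=\lceil i/\lambda\rceil$), so $|e_i-k|\le d$ pins $i$ to $[(k-d-1)\lambda+1,(k+d)\lambda]=[-d\lambda+1,d\lambda+\lambda]\oplus(k\lambda-\lambda)$. Your explicit inversion of the ceiling is just a more careful write-up of the same bookkeeping.
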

The above fact is  useful  to capture the frequency permutations that are $d$-close to $\bm{e}$. 
Note that the set $S=[-d\lambda+1,d\lambda+\lambda]$ is independent of $k$.
We give a recursive algorithm {\sc EnumV}$_{\lambda,n,d}$ in figure \ref{enumv}. It enumerates all frequency permutations $\pi\in S_{n}^\lambda$ in $B(d,\bm{e})$. It is a depth-first-search style algorithm. Basically, it first tries to put $1$'s into $\lambda$ proper vacant positions of $\pi$. Then, it tries to put $2$'s, $\dots$, $m$'s into the partial frequency permutations recursively.  According to fact \ref{shift},   symbol $k$ is assigned to positions of indices in $[-d\lambda+1,d\lambda+\lambda]\oplus(k\lambda-\lambda)$, and  these positions are 
said to be {\em valid} for $k$. 
\begin{figure}
\begin{tabbing}
{\sc EnumV}$_{\lambda,n,d}(k,P)$\\
1. ~\=if \=$k\le m$ then\\
2.\>\>for \=each partition $(X,X')$ of $P$ with $|X|=\lambda$ do\\
3.\>\>\>if \=$X'\cap(-\infty,-d\lambda+\lambda]=\emptyset$ then \\
\>\>\>// Make sure it is a proper partition\\
4.\>\>\>\>for \=$i\in X\oplus{(k\lambda-\lambda)}$ do\\
5.\>\>\>\>\>$\pi_i\leftarrow k$; // Assign $k$ to the $i$-th position\\
6.\>\>\>\>$Y\leftarrow (X'\oplus(-\lambda))\cup[d\lambda+1,d\lambda+\lambda]$;\\
7.\>\>\>\>{\sc EnumV}$_{\lambda,n,d}(k+1,Y)$; \\
8.\>\>\>\>for \=$i\in X\oplus(k\lambda-\lambda)$ do\\
9.\>\>\>\>\>$\pi_i\leftarrow0$; // Reset $\pi_i$ to be vacant\\
10.\> else\\
11.\>\> if $P=[d\lambda+\lambda]$ then output $(\pi_1,\dots,\pi_n)$;
\end{tabbing}
\caption{{\sc EnumV}$_{\lambda,n,d}(k,P)$}\label{enumv}
\end{figure}

{\sc EnumV}$_{\lambda,n,d}$ takes an integer $k$ and a subset $P$ of $[-d\lambda+1,d\lambda+\lambda]$ as its input, and {\sc EnumV}$_{\lambda,n,d}$ uses an $(n+2d\lambda)$-dimensional vector $\pi$ as a global variable. For convenience, we extend the index set of $\pi$ to $[-d\lambda+1,n+d\lambda]$ and every entry of $\pi$ is initialized to $0$, which indicates that the entry is vacant. 
We use $P$ to trace the indices of valid vacant positions  for  symbol $k$, and the set of such positions is exactly $P\oplus(k\lambda-\lambda)$. 

We  call  {\sc EnumV}$_{\lambda,n,d}(1,[d\lambda+\lambda])$ to enumerate all frequency permutations which are $d$-close to $\bm{e}$. During the enumeration, {\sc EnumV}$_{\lambda,n,d}(k,P)$ assigns symbol $k$ into some $\lambda$ positions, indexed by $X\oplus(k\lambda-\lambda)$, of a partial frequency permutation, then it recursively invokes {\sc EnumV}$_{\lambda,n,d}(k+1,X'\oplus(-\lambda))\cup[d\lambda+1,d\lambda+\lambda])$, where $X$ and $X'$ form a partition of $P$ and $|X|=\lambda$. After the recursive call is done, {\sc EnumV}$_{\lambda,n,d}(k,P)$ reset positions indexed by $X\oplus(k\lambda-\lambda)$ as vacant. Then, it repeats to search another choice of $\lambda$ positions until all possible combinations of $\lambda$ positions are investigated. For $k=m+1$, {\sc EnumV}$_{\lambda,n,d}(k,P)$ outputs $\pi$ if $P=[d\lambda+\lambda]$.
Given $n=\lambda m$, $k$ is initialized to 1 and $P$ to $[d\lambda+\lambda]$, we have
the following claims.
\begin{claim}\label{max}
In each of the recursive call of {\sc EnumV}$_{\lambda,n,d}$, in line 6 we have $\max(Y)=d\lambda+\lambda$.
\begin{proof}
By induction, it is clear for $k=1$.   Suppose $\max(P)=d\lambda+\lambda$.
Since $Y=(X'\oplus(-\lambda))\cup[d\lambda+1,d\lambda+\lambda]$ and $\max(X')
\le d\lambda+\lambda$, we have  $\max(Y)=d\lambda+\lambda$.
\end{proof}
\end{claim}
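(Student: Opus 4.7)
The plan is to prove the claim by induction on the recursion depth (equivalently, on the value of $k$), with the inductive hypothesis that whenever {\sc EnumV}$_{\lambda,n,d}(k,P)$ is invoked, the input set satisfies $\max(P)=d\lambda+\lambda$. Since the conclusion of the claim is precisely this property propagated to the next level, establishing the hypothesis at the top of each call automatically delivers the claim.

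For the base case I would note that the top-level invocation is {\sc EnumV}$_{\lambda,n,d}(1,[d\lambda+\lambda])$, so $\max(P)=d\lambda+\lambda$ holds trivially when $k=1$. For the inductive step, I would fix a call whose input $P$ satisfies $\max(P)=d\lambda+\lambda$ and consider an arbitrary partition $(X,X')$ produced on line 2. Since $X'\subseteq P$, certainly $\max(X')\le d\lambda+\lambda$, hence every element of $X'\oplus(-\lambda)$ is at most $d\lambda$. On the other hand, $Y$ contains the block $[d\lambda+1,d\lambda+\lambda]$, which already forces $\max(Y)\ge d\lambda+\lambda$. Combining the two estimates,
\[
\max(Y)=\max\bigl(\max(X'\oplus(-\lambda)),\,d\lambda+\lambda\bigr)=d\lambda+\lambda,
\]
which is what line 6 is claimed to produce. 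Finally, the recursive call on line 7 passes $Y$ in as the new $P$, so the inductive hypothesis $\max(P)=d\lambda+\lambda$ is maintained for the next level, closing the induction.

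There is no genuine obstacle here; the only point that might deserve a sentence of justification is that the shifted set $X'\oplus(-\lambda)$ cannot accidentally push an element above $d\lambda+\lambda$, which is immediate from $X'\subseteq P\subseteq[-d\lambda+1,d\lambda+\lambda]$ and the fact that shifting by $-\lambda$ only decreases values. Everything else is a direct unpacking of the definition of $Y$ on line 6.
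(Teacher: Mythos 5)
Your proof is correct and follows essentially the same route as the paper's: induction on $k$ with the hypothesis $\max(P)=d\lambda+\lambda$, using $X'\subseteq P$ to bound $\max(X'\oplus(-\lambda))$ by $d\lambda$ and the explicit block $[d\lambda+1,d\lambda+\lambda]$ in $Y$ to pin the maximum at $d\lambda+\lambda$. Your version merely spells out the two inequalities that the paper's one-line argument leaves implicit.
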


\begin{claim}\label{vacant}
In line 6, for each $k\in[m+1]$ and each $i\in Y\oplus((k+1)\lambda-\lambda)$, we have $\pi_i=0$.
\begin{proof}
We prove it by induction on $k$. It is clear for $k=1$.  Assume the claim is true up to $k< m+1$, i.e.,
for each $i\in P\oplus(k\lambda-\lambda)$, $\pi_i =0$. Now, consider the following scenario, {\sc EnumV}$_{\lambda,n,d}(k,P)$ invokes {\sc EnumV}$_{\lambda,n,d}(k+1,Y)$. 

Since  $Y=(X'\oplus(-\lambda))\cup[d\lambda+1,d\lambda+\lambda]$, we have
$Y\oplus((k+1)\lambda-\lambda)=(X'\oplus(k\lambda-\lambda)) \cup[k\lambda+d\lambda+1,k\lambda+d\lambda+\lambda]$. 
While $X'\subset P$ and $[k\lambda+d\lambda+1,k\lambda+d\lambda+\lambda]$ are new vacant positions, it is clear
$\pi_i=0$ in these entries. 
\end{proof}
\end{claim}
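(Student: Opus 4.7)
The plan is to prove the claim by induction on $k$, using an equivalent reformulation: at every invocation of {\sc EnumV}$_{\lambda,n,d}(k,P)$, every index in $P\oplus(k\lambda-\lambda)$ satisfies $\pi_i=0$. Applying this reformulation to the subsequent recursive call {\sc EnumV}$_{\lambda,n,d}(k+1,Y)$ immediately recovers the stated form, because the call in line 7 is issued exactly once line 6 finishes computing $Y$. The base case $k=1$ is trivial, since the global vector $\pi$ is initialized to $0$ in every coordinate.

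For the inductive step, suppose the reformulated claim holds when entering {\sc EnumV}$_{\lambda,n,d}(k,P)$, so that $\pi_i=0$ for every $i\in P\oplus(k\lambda-\lambda)$. After choosing the partition $(X,X')$ of $P$ with $|X|=\lambda$ in line 2 and executing line 6, expanding the definition of $Y$ yields
\[
Y\oplus((k+1)\lambda-\lambda) \;=\; \bigl(X'\oplus(k\lambda-\lambda)\bigr)\,\cup\,[k\lambda+d\lambda+1,\,k\lambda+d\lambda+\lambda],
\]
and I would handle the two pieces separately. For the first piece, $X'\subseteq P$ and $X\cap X'=\emptyset$, so $X'\oplus(k\lambda-\lambda)\subseteq P\oplus(k\lambda-\lambda)$ was vacant on entry by the inductive hypothesis, and the only writes performed in lines 4--5 land in $X\oplus(k\lambda-\lambda)$, which is disjoint from it. Hence those entries remain $0$.

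The main obstacle is the second piece $[k\lambda+d\lambda+1,\,k\lambda+d\lambda+\lambda]$: the inductive hypothesis is local to the current frame, yet I must argue that no assignment made by any ancestor frame has ever touched these indices. To do this I would invoke Claim \ref{max}, which guarantees $\max(P_{k'})=d\lambda+\lambda$ at every ancestor level $k'\le k$. Any position written at level $k'$ lies in $X_{k'}\oplus(k'\lambda-\lambda)\subseteq P_{k'}\oplus(k'\lambda-\lambda)$, hence is at most $(d\lambda+\lambda)+(k'\lambda-\lambda)=k'\lambda+d\lambda\le k\lambda+d\lambda$, strictly below the range in question. Therefore these fresh positions still hold their initialized value $0$, completing the inductive step. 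Leveraging Claim \ref{max} to bound the reach of every ancestor is exactly what closes the otherwise non-obvious history-dependent gap in the argument.
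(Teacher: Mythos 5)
Your proof is correct and follows essentially the same route as the paper's: the same induction on $k$ and the same decomposition of $Y\oplus((k+1)\lambda-\lambda)$ into $X'\oplus(k\lambda-\lambda)$ and $[k\lambda+d\lambda+1,k\lambda+d\lambda+\lambda]$. The only difference is that you explicitly justify, via Claim~\ref{max}, why the second block has never been written to by any frame on the call stack --- a point the paper dismisses as ``new vacant positions'' --- which is a welcome tightening of the same argument rather than a different approach.
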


\begin{claim} \label{range}
In each recursive call of {\sc EnumV}$_{\lambda,n,d}(k,P)$, $P$ must be a subset of $[-d\lambda+1,d\lambda+\lambda]$ of cardinality $d\lambda+\lambda$. It implies, $|e_i-k|\le d$ for $i\in P\oplus(k\lambda -\lambda)$.

\end{claim}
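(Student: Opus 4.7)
The plan is to prove all three assertions by induction on $k$, using Claim \ref{max} implicitly via the invariant $\max(P) = d\lambda+\lambda$ and the line~3 filter $X'\cap(-\infty,-d\lambda+\lambda]=\emptyset$ to control the lower end of $P$.

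For the base case $k=1$, the initial call uses $P=[d\lambda+\lambda]$, which is clearly a subset of $[-d\lambda+1,d\lambda+\lambda]$ of cardinality $d\lambda+\lambda$. For the inductive step, I would assume $P\subseteq[-d\lambda+1,d\lambda+\lambda]$ with $|P|=d\lambda+\lambda$ and consider the recursive call {\sc EnumV}$_{\lambda,n,d}(k+1,Y)$ with $Y=(X'\oplus(-\lambda))\cup[d\lambda+1,d\lambda+\lambda]$ and $|X'|=d\lambda$. The line~3 guard forces $X'\subseteq[-d\lambda+\lambda+1,d\lambda+\lambda]$, so $X'\oplus(-\lambda)\subseteq[-d\lambda+1,d\lambda]$; together with the appended block $[d\lambda+1,d\lambda+\lambda]$ this yields $Y\subseteq[-d\lambda+1,d\lambda+\lambda]$, and since the two summands are disjoint we get $|Y|=d\lambda+\lambda$.

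For the implication about $|e_i-k|\le d$, I would substitute $i=p+(k\lambda-\lambda)$ for $p\in P\subseteq[-d\lambda+1,d\lambda+\lambda]$, giving $i\in[(k-d-1)\lambda+1,(k+d)\lambda]$. Since $e_i=\lceil i/\lambda\rceil$, the upper bound $i\le (k+d)\lambda$ gives $e_i\le k+d$, while $i\ge (k-d-1)\lambda+1$ gives $i/\lambda>k-d-1$ and hence $\lceil i/\lambda\rceil\ge k-d$. Combining these yields $|e_i-k|\le d$.

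The main (and essentially only) subtlety is making sure the line~3 guard is exactly what is needed to keep $Y$ inside the window $[-d\lambda+1,d\lambda+\lambda]$ after the shift by $-\lambda$: without it, elements of $X'$ as small as $-d\lambda+1$ could shift down to $-d\lambda-\lambda+1$ and break the invariant. Once this point is checked, the cardinality and the ceiling computation are routine.
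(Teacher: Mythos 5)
Your proposal is correct and follows essentially the same route as the paper: induction on $k$ with the base case $P=[d\lambda+\lambda]$, using the line~3 guard together with the induction hypothesis to get $X'\oplus(-\lambda)\subseteq[-d\lambda+1,d\lambda]$ and hence $Y\subseteq[-d\lambda+1,d\lambda+\lambda]$ with $|Y|=d\lambda+\lambda$ by disjointness. Your explicit ceiling computation for $|e_i-k|\le d$ is just a more detailed version of the paper's observation that $\bm{e}$ takes values in $[k-d,k+d]$ on the shifted window.
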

\begin{proof}
We prove it by induction on $k$. For $k=1$, $P$ is $[d\lambda+\lambda]$, and the claim is obvious.  Assume the claim is true up to $k$, and {\sc EnumV}$_{\lambda,n,d}(k,P)$ invokes {\sc EnumV}$_{\lambda,n,d}(k+1,Y)$. Thus $Y=(X'\oplus(-\lambda))\cup[d\lambda+1,d\lambda+\lambda]$. Due to the constraint on $X'$ in line 3 and the induction hypothesis, we have $X'\oplus(-\lambda)\subseteq[-d\lambda+1,d\lambda]$. We conclude that $Y\subseteq [-d\lambda+1,d\lambda]\cup[d\lambda+1,d\lambda+\lambda]$ and $|Y|=|X'|+\lambda=d\lambda+\lambda$.  Since
$[-d\lambda+1,d\lambda+\lambda]\oplus(k\lambda)=[k\lambda -d\lambda+1,k\lambda +d\lambda+\lambda]$, we know $\bm{e}$ has values from $[k-d+1, k+d+1]$ in these positions. I.e., $|e_i-(k+1)|\le d$ for $i\in Y\oplus(k\lambda)$.
Hence, the claim is true.
\end{proof}

\begin{claim}\label{noneg}
At the beginning of the  invocation of {\sc EnumV}$_{\lambda,n,d}(k,P)$, $i\in P\oplus(k\lambda-\lambda)$ implies $i>0$.
\end{claim}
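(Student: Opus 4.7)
The plan is to prove this by induction on $k$, mirroring the inductive structure used for Claims \ref{max} and \ref{range}. The invariant to maintain is precisely that the set of global indices $P\oplus(k\lambda-\lambda)$ sits entirely inside the positive integers.

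The base case $k=1$ is immediate: the initial call supplies $P=[d\lambda+\lambda]$ and $k\lambda-\lambda=0$, so $P\oplus(k\lambda-\lambda)=[d\lambda+\lambda]$, which is positive.

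For the inductive step, I would consider a recursive call {\sc EnumV}$_{\lambda,n,d}(k+1,Y)$ issued from {\sc EnumV}$_{\lambda,n,d}(k,P)$, where line 6 gives $Y=(X'\oplus(-\lambda))\cup[d\lambda+1,d\lambda+\lambda]$. Distributing the shift by $(k+1)\lambda-\lambda=k\lambda$ over the union yields
\begin{align*}
Y\oplus((k+1)\lambda-\lambda) &= \bigl(X'\oplus(k\lambda-\lambda)\bigr) \\
&\quad\cup\,[\,k\lambda+d\lambda+1,\ k\lambda+d\lambda+\lambda\,].
\end{align*}
The right-hand interval is manifestly positive for $k\geq 1$ and $d,\lambda\geq 1$. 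For the left-hand piece, since $X'\subseteq P$ we have $X'\oplus(k\lambda-\lambda)\subseteq P\oplus(k\lambda-\lambda)$, and positivity follows directly from the induction hypothesis at level $k$.

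I do not anticipate a real obstacle; the computation is just shift arithmetic combined with the inductive hypothesis. One subtlety worth noting in the write-up is that this argument does not invoke the filter on line 3 (which forbids $X'$ from meeting $(-\infty,-d\lambda+\lambda]$). That filter plays a different role, namely ensuring $Y$ remains consistent with the Chebyshev window for symbol $k+1$ as used in Claim \ref{range}; positivity of the global positions is already guaranteed by the inclusion $X'\subseteq P$ and the inductive hypothesis, so the proof of the present claim is essentially a bookkeeping argument.
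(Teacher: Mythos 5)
Your proof is correct and follows essentially the same route as the paper's: induction on $k$ combined with the shift arithmetic from line 6, the paper merely phrasing the step via $\min(Y)\ge\min(P)-\lambda$ where you decompose $Y\oplus(k\lambda)$ into the shifted $X'\subseteq P$ part and the manifestly positive new interval. Your aside that line 3's filter is not needed here is accurate but does not change the substance.
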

\begin{proof}
It is clear for $k=1$. Observe that  $\min(Y)\ge\min(P)-\lambda$.  Since,
$(\min(P)-\lambda)\oplus(k\lambda)=\min(P)\oplus (k\lambda-\lambda)$, 
the claim holds for $k>1$.
\end{proof}

\begin{claim}\label{progress}
For $k\in[m]$, when {\sc EnumV}$_{\lambda,n,d}(k,P)$ invoke {\sc EnumV}$_{\lambda,n,d}(k+1,Y)$ in line 7, there are exactly $\lambda$ entries of $\pi$ equal $i$ for $i\in[k-1]$.
\end{claim}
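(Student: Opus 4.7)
The plan is to argue by induction on $k$, using the invariant exactly as stated. The base case $k=1$ is immediate: at the top-level invocation {\sc EnumV}$_{\lambda,n,d}(1,[d\lambda+\lambda])$ the set $[k-1]=[0]=\emptyset$ makes the claim vacuous, since $\pi$ has just been initialized to zero.

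For the inductive step, I would assume the claim for $k-1$: at the moment the level-$(k-1)$ call invokes {\sc EnumV}$_{\lambda,n,d}(k,P)$ in line 7, each symbol in $[k-2]$ occupies exactly $\lambda$ entries of $\pi$. Combined with the fact that the line-5 loop of the invoking call has just placed symbol $k-1$ into its $\lambda$ designated positions, this tells me that at the very start of {\sc EnumV}$_{\lambda,n,d}(k,P)$ every symbol in $[k-1]$ already appears $\lambda$ times. I would then trace {\sc EnumV}$_{\lambda,n,d}(k,P)$ forward to the moment it invokes {\sc EnumV}$_{\lambda,n,d}(k+1,Y)$ in line 7; the only modification to $\pi$ in between is the line-5 loop, which writes symbol $k$ into the $\lambda$ positions indexed by $X\oplus(k\lambda-\lambda)$.

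The critical sub-step is to rule out that this loop overwrites any previously placed symbol, which would corrupt the counts for $[k-1]$. Here I would appeal directly to Claim~\ref{vacant}: every position indexed by $P\oplus(k\lambda-\lambda)$ is $0$ at the start of {\sc EnumV}$_{\lambda,n,d}(k,P)$, and since $X\subseteq P$ by construction in line 2, the target positions $X\oplus(k\lambda-\lambda)$ are all vacant when written in line 5. Hence the $\lambda$-counts of symbols $1,\dots,k-1$ are untouched, and the desired invariant survives up to the line-7 invocation of {\sc EnumV}$_{\lambda,n,d}(k+1,Y)$.

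I do not anticipate a genuine obstacle; the argument is essentially bookkeeping that matches each pseudocode line to a snapshot of $\pi$. The one place where care is required is the opening synchronization in the inductive step, i.e., remembering that the inductive hypothesis at level $k-1$ is a statement just before its line-7 call, so one must separately account for the $\lambda$ copies of $k-1$ placed in that call's line-5 loop before entering {\sc EnumV}$_{\lambda,n,d}(k,P)$. All substantive combinatorial content has already been absorbed into Claims~\ref{vacant}, \ref{range}, and \ref{noneg}.
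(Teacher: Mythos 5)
Your argument is correct and takes the same route as the paper: the paper's entire proof of this claim is the one-liner ``It is implied by lines 4 and 5,'' and your induction on $k$ --- with Claim~\ref{vacant} invoked to guarantee that line 5 writes only into vacant positions and therefore never disturbs the counts of symbols $1,\dots,k-1$ --- is exactly the bookkeeping that one-liner compresses. No gap.
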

\begin{proof}
It is implied by lines 4 and 5. 
\end{proof}

\begin{lemma}\label{trace}
At the beginning of the execution of {\sc EnumV}$_{\lambda,n,d}(k,P)$, $P\oplus(k\lambda-\lambda)=\{i:i>0\wedge \pi_i=0\wedge i\in[-d\lambda+1,d\lambda+\lambda]\oplus(k\lambda-\lambda)\}$.
\end{lemma}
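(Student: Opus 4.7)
\medskip
\noindent\textbf{Proof proposal.} The plan is to argue by induction on $k$, tracking how the set $P$ evolves along the recursion. The invariant to maintain is exactly the statement of the lemma: at the start of {\sc EnumV}$_{\lambda,n,d}(k,P)$, the shifted set $P\oplus(k\lambda-\lambda)$ coincides with the set of positive, vacant positions of $\pi$ that still lie in the valid range $[-d\lambda+1,d\lambda+\lambda]\oplus(k\lambda-\lambda)$ for symbol $k$.

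For the base case $k=1$, the initial call uses $P=[d\lambda+\lambda]$, all entries of $\pi$ are $0$, and the shift $k\lambda-\lambda$ equals $0$. The positive part of $[-d\lambda+1,d\lambda+\lambda]$ is exactly $[d\lambda+\lambda]$, matching $P\oplus 0$. For the inductive step, suppose the invariant holds at the call $(k,P)$. The only way to reach a new invocation is through line 7 with $Y=(X'\oplus(-\lambda))\cup[d\lambda+1,d\lambda+\lambda]$, where $(X,X')$ is a partition of $P$ with $|X|=\lambda$ satisfying the line-3 constraint $X'\cap(-\infty,-d\lambda+\lambda]=\emptyset$. Shifting by $k\lambda$ gives
\[
Y\oplus k\lambda=\bigl(X'\oplus(k\lambda-\lambda)\bigr)\cup[k\lambda+d\lambda+1,\,k\lambda+d\lambda+\lambda],
\]
and I want to show this equals the set of positive, vacant entries of the updated $\pi$ that lie in $[k\lambda-d\lambda+1,\,k\lambda+d\lambda+\lambda]$.

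I would split the target set into the ``old'' part $[k\lambda-d\lambda+1,\,k\lambda+d\lambda]$ and the ``new'' part $[k\lambda+d\lambda+1,\,k\lambda+d\lambda+\lambda]$. For the new part, these indices have never been touched by any earlier recursive call (they lie strictly above any previously valid range), so claim \ref{vacant} together with the initialization of $\pi$ gives $\pi_i=0$ there, and they are automatically positive; so the new part contributes exactly $[k\lambda+d\lambda+1,\,k\lambda+d\lambda+\lambda]$. For the old part, the induction hypothesis says that before line 5 the vacant positive indices in $[k\lambda-d\lambda-\lambda+1,\,k\lambda+d\lambda]$ were exactly $P\oplus(k\lambda-\lambda)$; after line 5 writes $k$ into the $\lambda$ positions $X\oplus(k\lambda-\lambda)$, the remaining vacant positive indices in that larger range are exactly $X'\oplus(k\lambda-\lambda)$. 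Intersecting with the smaller range $[k\lambda-d\lambda+1,\,k\lambda+d\lambda]$ drops precisely the indices in $[k\lambda-d\lambda-\lambda+1,\,k\lambda-d\lambda]$, which correspond to elements of $X'$ lying in $(-\infty,-d\lambda+\lambda]$, and the line-3 guard has already forbidden such elements; hence the intersection is unchanged and equals $X'\oplus(k\lambda-\lambda)$. Combining the two parts yields $Y\oplus k\lambda$, completing the step.

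The routine pieces are the shift arithmetic and the observation that line 5 is the only operation that alters $\pi$ between the two calls. The step I expect to need the most care is matching the exclusion range $(-\infty,-d\lambda+\lambda]$ against the shrinkage of the valid window from $[k\lambda-d\lambda-\lambda+1,\,k\lambda+d\lambda]$ to $[k\lambda-d\lambda+1,\,k\lambda+d\lambda+\lambda]$; here I would explicitly verify $-d\lambda+\lambda+(k\lambda-\lambda)=k\lambda-d\lambda$ to confirm that the line-3 condition is exactly the condition that no vacant position is silently discarded. Positivity of the resulting indices comes from claim \ref{noneg} (or is re-derived along the way), and $|Y|=d\lambda+\lambda$ from claim \ref{range}.
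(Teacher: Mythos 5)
Your proof is correct, and it is in fact more complete than the one the paper gives. The paper disposes of Lemma \ref{trace} in one line, citing Claims \ref{vacant}, \ref{range}, and \ref{noneg}; those three claims (each proven by its own induction on $k$) yield the inclusion $P\oplus(k\lambda-\lambda)\subseteq\{i:i>0\wedge\pi_i=0\wedge i\in[-d\lambda+1,d\lambda+\lambda]\oplus(k\lambda-\lambda)\}$ --- vacancy from Claim \ref{vacant}, membership in the window from Claim \ref{range}, positivity from Claim \ref{noneg} --- but the reverse inclusion (that \emph{every} positive vacant position in the window is accounted for by $P$) is left implicit. Your direct induction maintains the full two-sided invariant: the decomposition of the new window into the overlap $[k\lambda-d\lambda+1,k\lambda+d\lambda]$ with the old window plus the fresh block $[k\lambda+d\lambda+1,k\lambda+d\lambda+\lambda]$, together with the verification that the discarded strip $[k\lambda-d\lambda-\lambda+1,k\lambda-d\lambda]$ corresponds exactly to $X'\cap(-\infty,-d\lambda+\lambda]$ (which line 3 forces to be empty), is precisely the bookkeeping needed to show nothing vacant is silently lost. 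What the paper's route buys is brevity and reuse of already-proven claims; what yours buys is a genuinely airtight equality rather than a one-directional containment, at the cost of redoing some of the shift arithmetic that Claims \ref{vacant}--\ref{noneg} already contain. Both are valid; yours would serve a skeptical reader better.
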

\begin{proof}
The lemma holds by claims \ref{vacant},  \ref{range}, and  \ref{noneg}.
\end{proof}

Let $\pi$ be the output of {\sc EnumV}$_{\lambda,n,d}(1,[d\lambda+\lambda])$. These facts ensure that  $\pi\in S_{n}^\lambda$ and $d_{\max}(\bm{e},\pi)\le d$. 

\begin{lemma}\label{enumerate}
For $k\in[m+1]$, let $\tau_k$ be a partial frequency permutation $d$-close to $\bm{e}$ and with each symbol $1,\dots, k-1$ appearing exactly $\lambda$ times in $\tau_k$.
For $i=1,\dots,n$, let $\pi_i=(\tau_k)_i$, if  $(\tau_k)_i \neq *$.  Let $\pi_i=0$ for the other $i$.
Every frequency permutation $\rho$ in $S_n^\lambda$, generated by
 {\sc EnumV}$_{\lambda,n,d}(k,P)$,   satisfies the following conditions:
\begin{enumerate}
\item  $\rho$ is consistent with $\tau_k$ over the entries with symbols $1,\dots, k-1$.
\item $\rho$ is $d$-close to $\bm{e}$.
\end{enumerate}
\end{lemma}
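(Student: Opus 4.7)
I would prove Lemma~\ref{enumerate} by reverse induction on $k$, descending from $k=m+1$ down to $k=1$. The base case $k=m+1$ is essentially bookkeeping: by Claim~\ref{progress}, by the time the algorithm reaches depth $m+1$ all $m$ symbols have already been placed, so $\pi$ coincides with $\tau_{m+1}$ on every entry. Line~11 emits $\pi$ only when $P=[d\lambda+\lambda]$, which is exactly the condition forced by Claim~\ref{max} together with the hypothesis that each symbol appears $\lambda$ times. Consistency with $\tau_{m+1}$ is immediate, and $d$-closeness is inherited from the assumption that $\tau_{m+1}$ is itself $d$-close to $\bm{e}$.

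For the inductive step, fix $k\le m$ and assume the lemma at level $k+1$. Any $\rho$ produced by {\sc EnumV}$_{\lambda,n,d}(k,P)$ arises from a particular partition $(X,X')$ of $P$ with $|X|=\lambda$ and $X'\cap(-\infty,-d\lambda+\lambda]=\emptyset$, followed by the recursive call at line~7 with $Y=(X'\oplus(-\lambda))\cup[d\lambda+1,d\lambda+\lambda]$. Between lines~5 and~7 the positions in $X\oplus(k\lambda-\lambda)$ of $\pi$ are set to $k$. I would then define $\tau_{k+1}$ to extend $\tau_k$ by setting $(\tau_{k+1})_i=k$ for $i\in X\oplus(k\lambda-\lambda)$ and keeping $(\tau_{k+1})_i=(\tau_k)_i$ elsewhere, and apply the inductive hypothesis to the recursive call, with $\tau_{k+1}$ playing the role of the partial frequency permutation at level $k+1$.

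The heart of the argument is therefore to verify that $\tau_{k+1}$ actually meets the hypotheses of the lemma at level $k+1$. Two things must be checked. First, $\tau_{k+1}$ must be $d$-close to $\bm{e}$: by Claim~\ref{range}, every $i\in P\oplus(k\lambda-\lambda)$ satisfies $|e_i-k|\le d$, and $X\oplus(k\lambda-\lambda)\subseteq P\oplus(k\lambda-\lambda)$; Claim~\ref{noneg} guarantees those indices are positive, so the new entries respect the distance bound and the previously set entries are unchanged. Second, each symbol in $[k]$ must appear exactly $\lambda$ times in $\tau_{k+1}$: symbols $1,\dots,k-1$ already appear $\lambda$ times by the hypothesis on $\tau_k$, while Lemma~\ref{trace} (combined with Claim~\ref{vacant}) certifies that $\pi_i=0$ at each $i\in X\oplus(k\lambda-\lambda)$ just before line~5, so the $\lambda$ newly-written entries do not overwrite any earlier symbol. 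The inductive hypothesis then gives that $\rho$ is consistent with $\tau_{k+1}$ on symbols $1,\dots,k$---which in particular means consistency with $\tau_k$ on symbols $1,\dots,k-1$---and that $\rho$ is $d$-close to $\bm{e}$, which are exactly the two conclusions of the lemma at level $k$.

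The only real obstacle is the verification above: one must correctly identify which positions of $\pi$ are vacant at the moment of assignment so the new $k$-entries neither collide with previously placed symbols nor violate the distance bound. This is precisely what Claims~\ref{max}--\ref{progress} and Lemma~\ref{trace} were assembled to deliver, so once they are invoked the induction goes through with no further computation.
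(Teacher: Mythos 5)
Your proposal is correct and follows essentially the same route as the paper: reverse induction on $k$ from $m+1$ down to $1$, with the base case handled via Claims~\ref{progress}, \ref{vacant}, and \ref{noneg} and the gate at line~11, and the inductive step passing from $\tau_k$ to $\tau_{k+1}$ using Lemma~\ref{trace} and Claims~\ref{range} and \ref{noneg} to justify that the newly placed $k$'s land in vacant, valid positions. Your write-up is in fact somewhat more explicit than the paper's in checking that $\tau_{k+1}$ satisfies the lemma's hypotheses at level $k+1$, but the underlying argument is the same.
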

\begin{proof}
We prove it by reverse induction. First, we consider the case $k=m+1$.  By claim \ref{progress}, every symbol appears exactly $\lambda$ times in $\tau_{m+1}$ and $\pi$. By claim \ref{vacant} and claim \ref{noneg}, we know that all of $\pi_1,\dots,\pi_n$ are nonzero if and only if $P=[d\lambda+\lambda]$. There are two possible cases:
\begin{itemize}
\item $P=[d\lambda+\lambda]$: By claim \ref{range},  $(\pi_1,\dots,\pi_n)$ is the only frequency permutation in $S_n^\lambda$  satisfying both conditions. 
\item $P\neq[d\lambda+\lambda]$: Then there is some $i\in [n]$ such that $\pi_i=0$. But there is  no frequency permutation $\rho$ in $S_n^\lambda$ with $\rho_i=m+1>m$.  This means $\pi$ is not well
assigned.
\end{itemize}
Note that {\sc EnumV}$_{\lambda,n,d}(k,P)$ outputs $\pi$ only if $P=[d\lambda+\lambda]$, otherwise there is no output. Hence, the claim is true for $k=m+1$. Assume the claim is true 
down to  $k+1$. For $k$, by lemma \ref{trace}, $P\oplus(k\lambda-\lambda)$ is exactly the set of all  positions which are vacant and valid for  $k$. In order to enumerate permutations satisfying the second condition, we must assign $k$'s into the valid positions. Therefore, we just need to try all possible choices of $\lambda$-element subset $X\oplus(k\lambda-\lambda)\subseteq P\oplus(k\lambda-\lambda)$. 
Line 3 of  {\sc EnumV}$_{\lambda,n,d}(k,P)$ ensures that $X$ is properly selected.
Then symbol $k$ is assigned to $X\oplus(k\lambda-\lambda)$ and we have a new
partial frequency permutation $\tau_{k+1}$.  By induction hypothesis, it is clear that the generated frequency permutations satisfy
both conditions.
\end{proof}

We have the following theorem as an immediate result of lemma \ref{enumerate}.

\begin{theorem}\label{EnumTHM}
{\sc EnumV}$_{\lambda,n,d}(1,[d\lambda+\lambda])$ enumerates exactly all $d$-close frequency permutations to $\bm{e}$ in $S_{n}^\lambda$.
\end{theorem}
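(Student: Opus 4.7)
The plan is to derive the theorem as a direct specialization of Lemma \ref{enumerate} to the call $k=1$, $P=[d\lambda+\lambda]$. At this starting level, no symbols have been assigned yet, so the associated partial frequency permutation $\tau_1$ is the all-$*$ vector and condition (1) of Lemma \ref{enumerate} (``$\rho$ is consistent with $\tau_k$ over entries with symbols $1,\dots,k-1$'') is vacuous. What remains is condition (2): every output is $d$-close to $\bm{e}$. Together with the termination rule in line 11, which requires $P=[d\lambda+\lambda]$ when $k=m+1$, and Claim \ref{progress}, which guarantees that each of the symbols $1,\dots,m$ has been placed exactly $\lambda$ times by the time the recursion reaches depth $m+1$, this yields $\rho \in S_n^\lambda$ with $d_{\max}(\bm{e},\rho) \le d$. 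This establishes the soundness half: every permutation output by the algorithm is a $d$-close frequency permutation to $\bm{e}$.

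For completeness, I would argue that every $\rho \in B(d,\bm{e}) \cap S_n^\lambda$ is actually produced along some branch of the recursion. Given such a $\rho$, define at each level $k$ the set $X_k := \{\,j : \rho_{j+k\lambda-\lambda} = k\,\}$, so that $X_k \oplus (k\lambda-\lambda)$ is exactly the set of positions where $\rho$ carries symbol $k$. By Fact \ref{shift}, $X_k \subseteq [-d\lambda+1,\,d\lambda+\lambda]$, and by Lemma \ref{trace} the set $P \oplus (k\lambda-\lambda)$ at the start of the $k$-th call is exactly the valid vacant positions for $k$, so $X_k \subseteq P$ and $|X_k| = \lambda$. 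Thus the particular partition $(X_k,\,P \setminus X_k)$ is among those enumerated by the loop in line 2. The remaining task is to check that this partition also passes the filter in line 3.

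The filter requires that $X' := P \setminus X_k$ contains no element of $(-\infty,\,-d\lambda+\lambda]$, because such an element would, after the shift by $-\lambda$ performed in line 6, land in $(-\infty,\,-d\lambda]$ and thereby force a position that is no longer valid for symbol $k+1$. But by construction $X'$ consists of those indices in $P$ where $\rho$ does \emph{not} place symbol $k$; any $j \in P$ with $j \le -d\lambda+\lambda$ corresponds to a position $j + k\lambda - \lambda$ that can legally carry only the symbol $k$ (again by Fact \ref{shift}, since smaller symbols are already placed and larger symbols are excluded by the valid-range constraint of Claim \ref{range}). Hence such a $j$ must lie in $X_k$, not in $X'$, and the filter is satisfied. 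Applying the induction of Lemma \ref{enumerate} down to $k+1$ with $\tau_{k+1}$ defined from $\rho$'s first $k$ symbols completes the argument.

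The main obstacle, and essentially the only content beyond a bookkeeping invocation of Lemma \ref{enumerate}, is the completeness verification in the previous paragraph: one must confirm that the partition $(X_k, X')$ arising from an arbitrary target $\rho$ passes the line-3 test. Once Fact \ref{shift} is read in the right direction — small-index positions in $P$ are forced to hold the current symbol $k$ — this becomes routine, and the theorem follows.
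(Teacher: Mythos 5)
Your proof is correct and follows essentially the same route as the paper: the theorem is obtained by specializing Lemma \ref{enumerate} to the call $k=1$, $P=[d\lambda+\lambda]$, with $\tau_1$ the all-$*$ partial frequency permutation, so that condition (1) is vacuous and the sequence $\tau_1,\dots,\tau_{m+1}$ tracks the branch producing a given $\rho$. The only difference is that you explicitly verify the completeness step that the line-3 filter never discards the partition induced by a target $\rho$ --- via the observation that any $j\in P\cap(-\infty,-d\lambda+\lambda]$ indexes a position forced to receive symbol $k$, since smaller symbols are already placed and $d$-closeness to $\bm{e}$ caps the entry at $k$ --- a detail the paper leaves implicit inside the proof of Lemma \ref{enumerate} (``Line 3 ensures that $X$ is properly selected'').
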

\begin{proof}
Let $\rho$ be a $d$-close frequency permutation to $\bm{e}$ in $S_{n}^\lambda$.
There is a corresponding sequence of partial frequency permutations $\tau_1,\dots, \tau_{m+1}$ $d$-close to
$\bm{e}$.  Note that $\tau_1$ has $*$ in all of its entries and $\tau_{m+1}=\rho$.  
By lemma  \ref{enumerate}, $\rho$ will be enumerated eventually.
\end{proof}

\section{Computing $V_\infty(\lambda,n,d)$}

The number of elements generated by {\sc EnumV}$_{\lambda,n,d}(1,[d\lambda+\lambda])$ is clearly $V_\infty(\lambda,n,d)$. However, the enumeration is not efficient, since $V_\infty(\lambda,n,d)$ is usually a very large number. In this section, we give two efficient  implementations to compute $V_\infty(\lambda,n,d)$.
Especially,
$V_\infty(\lambda,n,d)$ can be computed in polynomial time for constant $d$ and $\lambda$.

From the algorithm {\sc EnumV}, we see that whether {\sc EnumV}$_{\lambda,n,d}(k,P)$ invokes {\sc EnumV}$_{\lambda,n,d}(k+1,Y)$ or not depends only on $k$, $P$ and $Y$. During the execution of {\sc EnumV}$_{\lambda,n,d}(1,[d\lambda+\lambda])$,   {\sc EnumV}$_{\lambda,n,d}(k,P)$ is invoked recursively only when $[d\lambda+1,d\lambda+\lambda] \subset P \subset [-d\lambda+1,d\lambda+\lambda]$, due to line 6. Therefore, we can construct a directed acyclic graph $G_{\lambda,n,d}=\langle V_G,E_G\rangle$ where 
\begin{itemize}
\item $V_G=\{(k,U):k\in[m+1],U\subset[-d\lambda+1,d\lambda]$ and $|U|=d\lambda\}$.
\item $((k,U),(k+1,V))\in E_G$ if and only if {\sc EnumV}$_{\lambda,n,d}(k,U\cup [d\lambda+1,d\lambda+\lambda])$ invokes {\sc EnumV}$_{\lambda,n,d}(k+1,V\cup[d\lambda+1,d\lambda+\lambda])$. 
\end{itemize} 
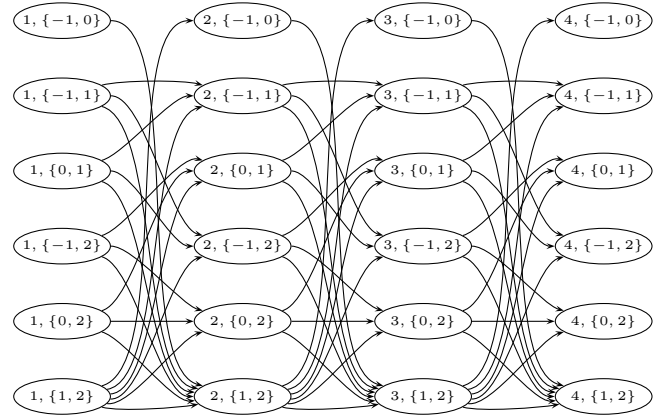
\begin{figure}[h]
\center
\begin{pspicture}(-4.2,-0.2)(4.2,5.2)
\tiny
\psset{linewidth=0.4pt}
\rput(-3.6,0){\ovalnode{A1}{\makebox[0.75cm]{$1,\{1,2\}$}}}
\rput(-3.6,1){\ovalnode{B1}{\makebox[0.75cm]{$1,\{0,2\}$}}}
\rput(-3.6,2){\ovalnode{C1}{\makebox[0.75cm]{$1,\{-1,2\}$}}}
\rput(-3.6,3){\ovalnode{D1}{\makebox[0.75cm]{$1,\{0,1\}$}}}
\rput(-3.6,4){\ovalnode{E1}{\makebox[0.75cm]{$1,\{-1,1\}$}}}
\rput(-3.6,5){\ovalnode{F1}{\makebox[0.75cm]{$1,\{-1,0\}$}}}
\rput(-1.2,0){\ovalnode{A2}{\makebox[0.75cm]{$2,\{1,2\}$}}}
\rput(-1.2,1){\ovalnode{B2}{\makebox[0.75cm]{$2,\{0,2\}$}}}
\rput(-1.2,2){\ovalnode{C2}{\makebox[0.75cm]{$2,\{-1,2\}$}}}
\rput(-1.2,3){\ovalnode{D2}{\makebox[0.75cm]{$2,\{0,1\}$}}}
\rput(-1.2,4){\ovalnode{E2}{\makebox[0.75cm]{$2,\{-1,1\}$}}}
\rput(-1.2,5){\ovalnode{F2}{\makebox[0.75cm]{$2,\{-1,0\}$}}}
\rput(1.2,0){\ovalnode{A3}{\makebox[0.75cm]{$3,\{1,2\}$}}}
\rput(1.2,1){\ovalnode{B3}{\makebox[0.75cm]{$3,\{0,2\}$}}}
\rput(1.2,2){\ovalnode{C3}{\makebox[0.75cm]{$3,\{-1,2\}$}}}
\rput(1.2,3){\ovalnode{D3}{\makebox[0.75cm]{$3,\{0,1\}$}}}
\rput(1.2,4){\ovalnode{E3}{\makebox[0.75cm]{$3,\{-1,1\}$}}}
\rput(1.2,5){\ovalnode{F3}{\makebox[0.75cm]{$3,\{-1,0\}$}}}
\rput(3.6,0){\ovalnode{A4}{\makebox[0.75cm]{$4,\{1,2\}$}}}
\rput(3.6,1){\ovalnode{B4}{\makebox[0.75cm]{$4,\{0,2\}$}}}
\rput(3.6,2){\ovalnode{C4}{\makebox[0.75cm]{$4,\{-1,2\}$}}}
\rput(3.6,3){\ovalnode{D4}{\makebox[0.75cm]{$4,\{0,1\}$}}}
\rput(3.6,4){\ovalnode{E4}{\makebox[0.75cm]{$4,\{-1,1\}$}}}
\rput(3.6,5){\ovalnode{F4}{\makebox[0.75cm]{$4,\{-1,0\}$}}}
\psset{nodesep=0pt} 
\nccurve[angleA=-15,angleB=190,ncurv=0.4]{->}{A1}{A2} 
\nccurve[angleA=-9,angleB=195,ncurv=0.4]{->}{A1}{B2}
\nccurve[angleA=-3,angleB=195,ncurv=0.4]{->}{A1}{C2}
\nccurve[angleA=3,angleB=195,ncurv=0.4]{->}{A1}{D2} 
\nccurve[angleA=9,angleB=195,ncurv=0.4]{->}{A1}{E2}
\nccurve[angleA=15,angleB=180,ncurv=0.4]{->}{A1}{F2} 
\nccurve[angleA=-15,angleB=185,ncurv=0.4]{->}{B1}{A2} 
\nccurve[angleA=0,angleB=180,ncurv=0.4]{->}{B1}{B2} 
\nccurve[angleA=15,angleB=180,ncurv=0.4]{->}{B1}{D2} 
\nccurve[angleA=-15,angleB=180,ncurv=0.4]{->}{C1}{A2} 
\nccurve[angleA=0,angleB=165,ncurv=0.4]{->}{C1}{B2} 
\nccurve[angleA=15,angleB=165,ncurv=0.4]{->}{C1}{D2} 
\nccurve[angleA=-15,angleB=175,ncurv=0.4]{->}{D1}{A2} 
\nccurve[angleA=0,angleB=180,ncurv=0.4]{->}{D1}{C2} 
\nccurve[angleA=15,angleB=180,ncurv=0.4]{->}{D1}{E2} 
\nccurve[angleA=-15,angleB=170,ncurv=0.4]{->}{E1}{A2} 
\nccurve[angleA=0,angleB=165,ncurv=0.4]{->}{E1}{C2} 
\nccurve[angleA=15,angleB=165,ncurv=0.4]{->}{E1}{E2} 
\nccurve[angleA=0,angleB=165,ncurv=0.4]{->}{F1}{A2} 

\nccurve[angleA=-15,angleB=190,ncurv=0.4]{->}{A2}{A3} 
\nccurve[angleA=-9,angleB=195,ncurv=0.4]{->}{A2}{B3}
\nccurve[angleA=-3,angleB=195,ncurv=0.4]{->}{A2}{C3}
\nccurve[angleA=3,angleB=195,ncurv=0.4]{->}{A2}{D3} 
\nccurve[angleA=9,angleB=195,ncurv=0.4]{->}{A2}{E3}
\nccurve[angleA=15,angleB=180,ncurv=0.4]{->}{A2}{F3} 
\nccurve[angleA=-15,angleB=185,ncurv=0.4]{->}{B2}{A3} 
\nccurve[angleA=0,angleB=180,ncurv=0.4]{->}{B2}{B3} 
\nccurve[angleA=15,angleB=180,ncurv=0.4]{->}{B2}{D3} 
\nccurve[angleA=-15,angleB=180,ncurv=0.4]{->}{C2}{A3} 
\nccurve[angleA=0,angleB=165,ncurv=0.4]{->}{C2}{B3} 
\nccurve[angleA=15,angleB=165,ncurv=0.4]{->}{C2}{D3} 
\nccurve[angleA=-15,angleB=175,ncurv=0.4]{->}{D2}{A3} 
\nccurve[angleA=0,angleB=180,ncurv=0.4]{->}{D2}{C3} 
\nccurve[angleA=15,angleB=180,ncurv=0.4]{->}{D2}{E3} 
\nccurve[angleA=-15,angleB=170,ncurv=0.4]{->}{E2}{A3} 
\nccurve[angleA=0,angleB=165,ncurv=0.4]{->}{E2}{C3} 
\nccurve[angleA=15,angleB=165,ncurv=0.4]{->}{E2}{E3} 
\nccurve[angleA=0,angleB=165,ncurv=0.4]{->}{F2}{A3} 

\nccurve[angleA=-15,angleB=190,ncurv=0.4]{->}{A3}{A4} 
\nccurve[angleA=-9,angleB=195,ncurv=0.4]{->}{A3}{B4}
\nccurve[angleA=-3,angleB=195,ncurv=0.4]{->}{A3}{C4}
\nccurve[angleA=3,angleB=195,ncurv=0.4]{->}{A3}{D4} 
\nccurve[angleA=9,angleB=195,ncurv=0.4]{->}{A3}{E4}
\nccurve[angleA=15,angleB=180,ncurv=0.4]{->}{A3}{F4} 
\nccurve[angleA=-15,angleB=185,ncurv=0.4]{->}{B3}{A4} 
\nccurve[angleA=0,angleB=180,ncurv=0.4]{->}{B3}{B4} 
\nccurve[angleA=15,angleB=180,ncurv=0.4]{->}{B3}{D4} 
\nccurve[angleA=-15,angleB=180,ncurv=0.4]{->}{C3}{A4} 
\nccurve[angleA=0,angleB=165,ncurv=0.4]{->}{C3}{B4} 
\nccurve[angleA=15,angleB=165,ncurv=0.4]{->}{C3}{D4} 
\nccurve[angleA=-15,angleB=175,ncurv=0.4]{->}{D3}{A4} 
\nccurve[angleA=0,angleB=180,ncurv=0.4]{->}{D3}{C4} 
\nccurve[angleA=15,angleB=180,ncurv=0.4]{->}{D3}{E4} 
\nccurve[angleA=-15,angleB=170,ncurv=0.4]{->}{E3}{A4} 
\nccurve[angleA=0,angleB=165,ncurv=0.4]{->}{E3}{C4} 
\nccurve[angleA=15,angleB=165,ncurv=0.4]{->}{E3}{E4} 
\nccurve[angleA=0,angleB=165,ncurv=0.4]{->}{F3}{A4} 
\end{pspicture}
\caption{Graph $G_{2,6,1}$}\label{G261}
\end{figure}
For example, figure \ref{G261} shows the structure of $G_{2,6,1}$.

$V_\infty(\lambda,n,d)$ equals the number of invocations of {\sc EnumV}$_{\lambda,n,d}(m+1,[d\lambda+\lambda])$. With this observation, $V_\infty(\lambda,n,d)$ also equals the number of paths from $(1,[d\lambda])$ to $(m+1,[d\lambda])$ in $G_{\lambda,n,d}$. By the definition of $G_{\lambda,n,d}$, it is a directed acyclic graph. The number of paths from one vertex to another in a directed acyclic graph can be computed in $O(|V|+|E|)$, where $|V|=(m+1){2d\lambda\choose d\lambda}$ and $|E|=O(|V|^2)$.  So  $V_\infty(\lambda,n,d)$ can be calculated in polynomial time with respect to $n$ if $\lambda$ and $d$ are constants.

The computation actually can be done in $O(\log n)$ for constant $\lambda$ and $d$. Define $H_{\lambda,d}=\langle V_H, E_H\rangle$ where $V_H=\{P:|P|=d\lambda\mbox{ and }P\subseteq [-d\lambda+1,d\lambda]\}$ and $(P,P')\in E_H$ if and only if there is some $k\in[m]$ such that {\sc EnumV}$_{\lambda,n,d}(k,P\cup[d\lambda+1,d\lambda+\lambda])$ invokes {\sc EnumV}$_{\lambda,n,d}(k+1,P'\cup [d\lambda+1,d\lambda+\lambda])$.
Note that $|V_H|={2d\lambda\choose d\lambda}$.
Figure \ref{H21} shows $H_{2,1}$ as an example.
\begin{figure}[h]
\center
\begin{pspicture}(-4,-2)(4,2)
\rput(-3.5,0){\ovalnode{A}{\makebox[0.8cm]{$\{1,2\}$}}}
\rput(-1.5,1.5){\ovalnode{B}{\makebox[0.8cm]{$\{0,2\}$}}}
\rput(-1.5,-1.5){\ovalnode{C}{\makebox[0.8cm]{$\{-1,2\}$}}}
\rput(1.5,1.5){\ovalnode{D}{\makebox[0.8cm]{$\{0,1\}$}}}
\rput(1.5,-1.5){\ovalnode{E}{\makebox[0.8cm]{$\{-1,1\}$}}}
\rput(3.5,0){\ovalnode{F}{\makebox[0.8cm]{$\{-1,0\}$}}}
\psset{nodesep=0pt} 
\nccurve[angleA=135,angleB=225,ncurv=3]{->}{A}{A} \ncarc{->}{A}{B} \ncarc{->}{A}{C} \ncarc{->}{A}{D} \ncarc{->}{A}{E} \ncarc{->}{A}{F}
\ncarc[arcangle=-16]{->}{B}{A}
\nccurve[angleA=135,angleB=180,ncurv=3]{->}{B}{B}
\ncarc{->}{B}{D}
\ncarc{->}{C}{A}
\ncarc{->}{C}{B}
\ncarc{->}{C}{D}
\ncarc[arcangle=0]{->}{D}{A}
\ncarc{->}{D}{C}
\ncarc{->}{D}{E}
\ncarc[arcangle=0]{->}{E}{A}
\ncarc{->}{E}{C}
\nccurve[angleA=45,angleB=360,ncurv=3]{->}{E}{E}
\ncarc[arcangle=0]{->}{F}{A}
\end{pspicture}
\caption{Graph $H_{2,1}$}\label{H21}
\end{figure}
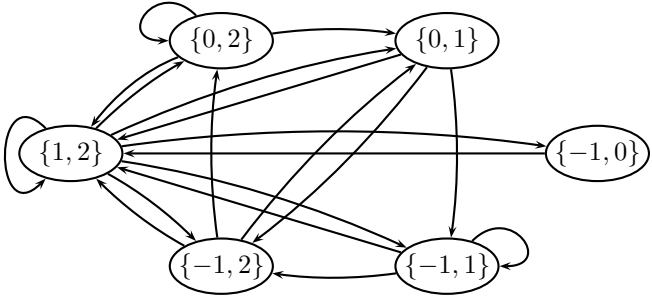

\begin{theorem}\label{logtime}
$V_\infty(\lambda,n,d)$ can be computed in $O(\log n)$ time for constant $d$ and $\lambda$.
\end{theorem}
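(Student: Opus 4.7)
The plan is to express $V_\infty(\lambda,n,d)$ as a single entry of the $m$-th power of the constant-size adjacency matrix of $H_{\lambda,d}$, and then compute that matrix power by fast exponentiation.

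The first step is to establish that the natural projection $(k,U)\mapsto U$, which forgets the level index, puts paths from $(1,[d\lambda])$ to $(m+1,[d\lambda])$ in $G_{\lambda,n,d}$ in bijection with walks of length $m$ in $H_{\lambda,d}$ starting and ending at $[d\lambda]$. The crucial observation is that the transition rule built into {\sc EnumV}$_{\lambda,n,d}$---the partition choice governed by line~3 together with the update $Y=(X'\oplus(-\lambda))\cup[d\lambda+1,d\lambda+\lambda]$ of line~6---depends only on $P$ and $Y$ (equivalently on the variable parts $U,V\subseteq[-d\lambda+1,d\lambda]$) and not on the current symbol $k$. Thus for each pair $(U,V)$, every intermediate $k\in[m]$ either permits the transition or none does, which matches exactly the definition of an edge in $H_{\lambda,d}$. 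Moreover, the partition $(X,X')$ is uniquely recovered from $P$ and $Y$ via $X'=(Y\setminus[d\lambda+1,d\lambda+\lambda])\oplus\lambda$, so the edges of $H_{\lambda,d}$ have multiplicity one. Combined with the earlier fact that $V_\infty(\lambda,n,d)$ counts paths from $(1,[d\lambda])$ to $(m+1,[d\lambda])$ in $G_{\lambda,n,d}$, this yields $V_\infty(\lambda,n,d)=(A^m)_{[d\lambda],[d\lambda]}$, where $A$ is the $\binom{2d\lambda}{d\lambda}\times\binom{2d\lambda}{d\lambda}$ adjacency matrix of $H_{\lambda,d}$.

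The second step is routine linear algebra: compute $A^m$ by repeated squaring, which requires $O(\log m)=O(\log n)$ matrix multiplications. For fixed $\lambda$ and $d$ the matrix has constant dimension, so each multiplication takes $O(1)$ time, and reading off the $([d\lambda],[d\lambda])$-entry of $A^m$ produces $V_\infty(\lambda,n,d)$ in $O(\log n)$ total time.

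The only step that deserves careful justification is the $k$-independence of the transition rule used in the bijection: the line~3 constraint $X'\cap(-\infty,-d\lambda+\lambda]=\emptyset$, the shift by $-\lambda$ in line~6, and the invariant $[d\lambda+1,d\lambda+\lambda]\subseteq P$ maintained throughout the recursion (cf.\ Claim~\ref{max}) are all purely statements about $P$ and the chosen partition, with no dependence on $k$. The boundary handling at $k=1$ and $k=m+1$ then reduces to specifying the source and target vertices of the walk in $H_{\lambda,d}$, leaving nothing further to check.
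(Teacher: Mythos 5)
Your proposal is correct and follows essentially the same route as the paper: both reduce the count to paths of length $m$ from $[d\lambda]$ to itself in $H_{\lambda,d}$ via the $k$-independence of the transition rule, and then read off the answer from the $m$-th power of the adjacency matrix computed by repeated squaring in $O(\log m)=O(\log n)$ constant-size multiplications. Your added remark that the partition $(X,X')$ is uniquely recoverable from $(P,Y)$, so the projection introduces no edge multiplicities, is a small point the paper leaves implicit but does not change the argument.
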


\begin{proof}
 Observe that the value of $k\in[m]$ is independent of  the invocation of {\sc EnumV}$_{\lambda,n,d}(k+1,P'\cup[d\lambda+1,d\lambda+\lambda])$ by {\sc EnumV}$_{\lambda,n,d}(k,P\cup[d\lambda+1,d\lambda+\lambda])$ , where $P$ and $P'\subset[-d\lambda+1,d\lambda]$ with $|P|=|P'|=d\lambda$. Therefore, the number of paths of length $m$ from $[d\lambda]$ to itself in $H_{\lambda,d}$ is equal to the number of paths from $(1,[d\lambda])$ to $(m+1,[d\lambda])$ in $G_{\lambda,n,d}$. 
 
Let $V_H=\left\{v_1,\dots,v_{|V_H|}\right\}$, where $v_1=[1,d\lambda]$. 
The number of paths of length $m$ from $v_1$ to $v_1$ is the first entry of the first column of the $m$-th power of $A_H$, where $A_H$ is the adjacency matrix of $H_{\lambda,d}$. Since $m$-th power can be computed in $O\left(f\left({2d\lambda\choose d\lambda}\right)\log m\right)$, where $O(f(x))$ is the time cost of multiplying two $x\times x$ matrices. It is well-know that  $f(x)=O(x^{2.376})$ by the Coppersmith-Winograd algorithm.  With constants $\lambda$ and $d$, $V_{\lambda,n,d}$ can be found in $O(\log n)$ time.
\end{proof}

However, the space to store the adjacency matrix $A_H$, $\Omega\left({2d\lambda \choose d\lambda}^{2}\right)$, can be too large to execute the $O(\log n)$-time algorithm.  For example, by setting $d=3$ and $\lambda=3$, we need at least ${18 \choose 9}^{2}\approx 2.36\times 10^9$ entries to store $A_H^{m}$. This makes the constant factor extraordinarily large in the proof of theorem \ref{logtime}. Hence, we provide an alternative implementation which runs in $O\left({2d\lambda\choose d\lambda}\cdot{d\lambda+\lambda\choose \lambda}\cdot m\right)$ time and $O({2d\lambda\choose d\lambda})$ space. This allows us to compute more efficiently for the cases with smaller $m$ and larger $d$ and $\lambda$. For example, ${2d\lambda\choose d\lambda}\cdot{d\lambda+\lambda\choose \lambda}\cdot m\approx 10^9$ for $d=3,\lambda=3,$ and $m=100$. To achieve the $O({2d\lambda\choose d\lambda})$-space complexity, we do not store the adjacency matrix $A_H$ in the memory. Since $A_H$ is the adjacency matrix of $H_{\lambda,d}$, for $\bm{y}=(y_1,\dots,y_{|V_H|})$ and $\bm{y}'=A_H\bm{y}=(y'_1,\dots,y'_{|V_H|})$, we have $y'_i=\sum_{(i,j)\in E_H}y_j$. Hence, if enumerating all edges in $E_H$ takes $S$ space and $T$ time, then we can  compute $A_H\bm{y}$ in $O(|V_H|+S)$ space and $O(T)$ time for any $|V_H|$-dimension $\bm{y}$.

\begin{lemma}\label{edge}
$|E_H|\le|V_H|\cdot{d\lambda+\lambda\choose \lambda}$ and $E_H$ can be enumerated in $O(d\lambda)$ space and $O(|E_H|)$ time.
\end{lemma}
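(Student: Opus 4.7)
The plan is to analyze each edge $(P,P')\in E_H$ by unpacking one recursion step of {\sc EnumV}. By the definition of $E_H$, such an edge exists iff there is a $\lambda$-subset $X\subseteq P\cup[d\lambda+1,d\lambda+\lambda]$ whose complement $X':=(P\cup[d\lambda+1,d\lambda+\lambda])\setminus X$ satisfies the line-3 guard $X'\cap(-\infty,-d\lambda+\lambda]=\emptyset$, and moreover $P'=X'\oplus(-\lambda)$. Notice that this description does not reference $k$, which is exactly what justified collapsing the layered graph $G_{\lambda,n,d}$ down to $H_{\lambda,d}$ in the first place, and also tells me that the admissible $X$'s are in bijection with the out-edges of $P$.

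For the cardinality bound, I would fix $P\in V_H$ and count its out-neighbors. Distinct admissible $X$ produce distinct $X'$ and hence distinct $P'=X'\oplus(-\lambda)$, so the out-degree of $P$ equals the number of admissible $X$. This is at most the total number of $\lambda$-subsets of the $(d\lambda+\lambda)$-element set $P\cup[d\lambda+1,d\lambda+\lambda]$, namely $\binom{d\lambda+\lambda}{\lambda}$. Summing over $P\in V_H$ yields $|E_H|\le|V_H|\cdot\binom{d\lambda+\lambda}{\lambda}$.

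For the enumeration, I would loop over each $P\in V_H$ and, inside, loop over each $\lambda$-subset $X$ of $P\cup[d\lambda+1,d\lambda+\lambda]$, accept when the line-3 condition holds, and emit $(P,P')$ with $P'=X'\oplus(-\lambda)$. Only one $P$, one $X$, and the derived $X'$ and $P'$ are held at any instant, each a subset of a universe of size at most $2d\lambda+\lambda$, giving $O(d\lambda)$ working memory. Because every $P$ has at least one admissible $X$ (the forced elements $P\cap[-d\lambda+1,-d\lambda+\lambda]$ number at most $\lambda$ and can always be completed to a $\lambda$-subset), we have $|E_H|\ge|V_H|$, so the outer loop is absorbed into the edge count. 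The main subtlety I anticipate is achieving genuinely $O(|E_H|)$ time rather than $O(|E_H|\cdot d\lambda)$: packing $P$ and $X$ as bitmasks over the $(2d\lambda+\lambda)$-element universe and using an incremental subset enumerator (Gray-code style) reduces the per-edge bookkeeping of complement, shift, and guard-check to a constant number of word operations, matching what the lemma implicitly demands.
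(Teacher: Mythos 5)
Your reduction of out-edges to admissible $\lambda$-subsets $X$ and your counting bound are correct and essentially match the paper, but your enumeration procedure does not achieve the claimed $O(|E_H|)$ time, and the subtlety you flag (per-subset bookkeeping) is not the one that actually threatens the bound. Looping over all $\binom{d\lambda+\lambda}{\lambda}$ subsets $X$ of $P\cup[d\lambda+1,d\lambda+\lambda]$ and rejecting those that violate the line-3 guard costs $\Theta\bigl(|V_H|\binom{d\lambda+\lambda}{\lambda}\bigr)$ candidate tests, whereas the out-degree of $P$ is only $\binom{d\lambda+\lambda-r}{\lambda-r}$ when $r=|P\cap[-d\lambda+1,-d\lambda+\lambda]|$ elements of $P$ are forced into $X$ by the guard. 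Summed over $P$, the ratio of candidates to edges is not $O(1)$: for $d=1$ one computes $|E_H|=19$ against $|V_H|\binom{d\lambda+\lambda}{\lambda}=36$ at $\lambda=2$ (consistent with Figure~\ref{H21}) and $1251$ against $4900$ at $\lambda=4$, with the gap growing in $\lambda$. So generate-and-filter suffices for the theorem-level running time $O\bigl(\binom{2d\lambda}{d\lambda}\binom{d\lambda+\lambda}{\lambda}m\bigr)$, but not for the lemma as stated; your observation that $|E_H|\ge|V_H|$ only absorbs the outer loop, not the rejected inner iterations.

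The missing idea is the one the paper's proof is built on: a partition $(X,X')$ of $P\cup[d\lambda+1,d\lambda+\lambda]$ passes line 3 if and only if $X$ contains all $r$ forced elements, so the admissible $X$ are in bijection with the $(\lambda-r)$-subsets of the remaining $d\lambda+\lambda-r$ elements. Enumerating those subsets directly is rejection-free, so the work at each vertex is proportional to its out-degree (plus $O(1)$, absorbed since every vertex has at least one out-edge, as you note), giving $O(|E_H|)$ time in $O(d\lambda)$ space; the same characterization gives the exact out-degree $\binom{d\lambda+\lambda-r}{\lambda-r}\le\binom{d\lambda+\lambda}{\lambda}$, which recovers your cardinality bound. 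Your Gray-code/bitmask remark addresses a real but secondary issue (keeping the per-edge cost constant) that applies equally to the paper's enumeration.
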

\begin{proof}
For $P\in V_H$ such that $|P\cap (-\infty, -d\lambda-\lambda]|=r$, $P$ has $d\lambda+\lambda-r\choose \lambda-r$ out-going edges, since every partition $(X,X')$ of $P\cup[d\lambda+1,d\lambda+\lambda]$ satisfies the condition in line 3 if and only if $P\cap (-\infty, -d\lambda-\lambda]\subset X$, i.e., every choice of $(\lambda-r)$-element subset of $P\backslash (-\infty, -d\lambda-\lambda]$ will invoke a recursive call. Since ${d\lambda+\lambda-r\choose \lambda-r}\le{d\lambda+\lambda \choose \lambda}$, the number of edges has an upper bound $|V_H|{d\lambda+\lambda\choose \lambda}$. 

To enumerate all $\lambda$-element subsets of a $(d\lambda+\lambda)$-element set, we need $O(d\lambda+\lambda)=O(d\lambda)$ space and $O({d\lambda+\lambda\choose \lambda})$ time. Since we can recycle the space, the enumeration of edges in $E_H$ can be done in $O(d\lambda)$ space and $O(|E_H|)$ time.
\end{proof}
Now, we give the alternative implementation. 

\begin{theorem}\label{lineartime}
$V_\infty(\lambda,n,d)$ can be computed in $O\left({2d\lambda\choose d\lambda}\cdot{d\lambda+\lambda\choose \lambda}\cdot m\right)$ time and $O({2d\lambda\choose d\lambda})$ space.
\end{theorem}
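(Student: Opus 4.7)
The plan is to compute $V_\infty(\lambda,n,d)$ by applying the linear map $A_H$ iteratively $m$ times to a suitable starting vector, while never materializing $A_H$ itself. By Theorem \ref{logtime}, $V_\infty(\lambda,n,d)$ equals the number of length-$m$ walks from $v_1 = [d\lambda]$ back to $v_1$ in $H_{\lambda,d}$, which is the $(v_1,v_1)$ entry of $A_H^m$; equivalently it is the $v_1$-coordinate of $A_H^m\bm{e}_{v_1}$, where $\bm{e}_{v_1}$ denotes the unit vector at $v_1$.

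I would maintain two $|V_H|$-dimensional arrays $\bm{y}$ and $\bm{y}'$, initialized by setting $\bm{y} = \bm{e}_{v_1}$. For each of $m$ iterations I zero out $\bm{y}'$, then walk through the edges of $E_H$ on the fly using the $O(d\lambda)$-space enumerator from Lemma \ref{edge}, and for each edge $(P,P')$ add $y_{P'}$ to $y'_P$. This exactly implements the relation $y'_i = \sum_{(i,j)\in E_H} y_j$ noted just before that lemma, i.e., $\bm{y}' = A_H\bm{y}$. After the sweep I swap $\bm{y}$ and $\bm{y}'$. The answer is read off as $y_{v_1}$ after the $m$-th iteration.

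For the running time, each iteration performs one pass over $E_H$ with $O(1)$ work per edge, so by Lemma \ref{edge} it costs $O(|E_H|) = O\left(\binom{2d\lambda}{d\lambda}\binom{d\lambda+\lambda}{\lambda}\right)$ time and $O(d\lambda)$ auxiliary working memory. Summed over $m$ iterations, the total time is $O\left(\binom{2d\lambda}{d\lambda}\binom{d\lambda+\lambda}{\lambda}m\right)$, matching the claim. The total space is dominated by the two vectors $\bm{y}$ and $\bm{y}'$, giving $O(|V_H|) = O\left(\binom{2d\lambda}{d\lambda}\right)$, since the $O(d\lambda)$ scratch space of the edge enumerator is negligible in comparison.

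The only nontrivial point is that we must never store $A_H$ explicitly — otherwise the space would jump to $\Omega(|V_H|^2)$ — and we must likewise never instantiate any power $A_H^t$ as a matrix. Lemma \ref{edge} is exactly what makes this possible: it lets one matrix-vector product be executed by a single streaming pass over $E_H$ with only $O(d\lambda)$ scratch space beyond the input and output vectors. Correctness is immediate from the identity $(A_H^m\bm{e}_{v_1})_{v_1} = (A_H^m)_{v_1,v_1}$ combined with Theorem \ref{logtime}.
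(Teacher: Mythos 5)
Your proposal is correct and follows essentially the same route as the paper's proof: both compute $A_H^m$ applied to the unit vector at $v_1$ by $m$ streaming matrix--vector products, using the edge enumerator of Lemma \ref{edge} to avoid storing $A_H$, with two $|V_H|$-dimensional work vectors giving the stated time and space bounds.
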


\begin{proof}
Let $\bm{x}=(1,0,\dots,0)^T$. Since $A_H^m\bm{x}$ is the first column of $A_H^m$, $V_\infty(\lambda,n,d)$ is the first entry of $A_H^m\bm{x}$. The alternative evaluates $A_H^1\bm(x),\dots,A_H^m\bm{x}$ iteratively. Instead of storing the whole adjacency matrix, it only uses two $|V_H|$-dimension vectors $\bm{y}$ and $\bm{y}'$ for storing $A_H^{i}\bm{x}$ and the intermediate result of $A_H^{i+1}\bm{x}$, respectively. Initially, $\bm{y}=\bm{x}$ and $i=0$. We compute $A_H\bm{y}$ by the algorithm described in lemma \ref{edge} and using $\bm{y}'$ to store the intermediate result. Then, we copy the result of $A_H\bm{y}$ back to $\bm{y}$. After $m$ repetitions, we have $\bm{y}=A_H^m\bm{x}$, and the first entry of $\bm{y}$ is $V_\infty(\lambda,n,d)$.
Therefore, the space complexity can be reduced to $O(|V_H|+d\lambda)=O({2d\lambda\choose d\lambda})$. The running time is $m\cdot O(|E_H|)=O\left({2d\lambda\choose d\lambda}\cdot{d\lambda+\lambda\choose \lambda}\cdot m\right)$.
\end{proof}

\section{Comparison with previous results}

In this section, we compare our results with previous ones. Shieh and Tsai \cite{ST10} showed that $(\lambda!)^mV_\infty(\lambda,n,d)$ equals the permanent of 0-1 matrix $A^{(\lambda,n,d)}=(a_{i,j}^{(\lambda,n,d)})$ where $a_{i,j}^{(\lambda,n,d)}=1$ if and only if $\left|\left\lceil\frac{i}{\lambda}\right\rceil-\left\lceil\frac{j}{\lambda}\right\rceil\right|\le d$. For example, 
\[A^{(2,8,1)}=\left(\begin{array}{cccccccc}1&1&1&1&0&0&0&0\\1&1&1&1&0&0&0&0\\1&1&1&1&1&1&0&0\\ 1&1&1&1&1&1&0&0\\0&0&1&1&1&1&1&1\\0&0&1&1&1&1&1&1\\0&0&0&0&1&1&1&1\\0&0&0&0&1&1&1&1\end{array}\right)\]
A naive approach to evaluate the permanent of an $n$-by-$n$ matrix takes $O(n!)$ time. In practice, $\Theta((\lambda!)^mV_\infty(\lambda,n,d))$ time is still required when using a backtracking algorithm. It is clear that both of our method are much more faster.

Kl{\o}ve \cite{Klove09} solved the recurrence of $V_\infty(\lambda,n,d)$, and he gave the value of $V_\infty(\lambda,\lambda m,d)$ for $\lambda\in[10]$, $m\in[20]$, and $d=1$. Schwartz \cite{Sch09} gave an algorithm which can be applied for computing $V_\infty(1,n,d)$. In this paper, we provide solutions to computing $V_\infty(\lambda,n,d)$ for $\lambda>1$ and $d>1$, which is not contained in their works. We list the values of $V_\infty(\lambda,\lambda m,d)$ for $\lambda>1$, $m\in[20]$, $d>1$, and $d\lambda\le 10$.

{\tiny
\begin{spacing}{0.5}
$\\
V_\infty(2,2,2)=1\\
V_\infty(2,4,2)=6\\
V_\infty(2,6,2)=90\\
V_\infty(2,8,2)=786\\
V_\infty(2,10,2)=6139\\
V_\infty(2,12,2)=54073\\
V_\infty(2,14,2)=477228\\
V_\infty(2,16,2)=4113864\\
V_\infty(2,18,2)=35579076\\
V_\infty(2,20,2)=308945881\\
V_\infty(2,22,2)=2679325561\\
V_\infty(2,24,2)=23222971098\\
V_\infty(2,26,2)=201351085146\\
V_\infty(2,28,2)=1745886520422\\
V_\infty(2,30,2)=15137227297027\\
V_\infty(2,32,2)=131243141767393\\
V_\infty(2,34,2)=1137923361184848\\
V_\infty(2,36,2)=9866167034815440\\
V_\infty(2,38,2)=85542686564024352\\
V_\infty(2,40,2)=741681846818742097\\
\\
V_\infty(2,2,3)=1\\
V_\infty(2,4,3)=6\\
V_\infty(2,6,3)=90\\
V_\infty(2,8,3)=2520\\
V_\infty(2,10,3)=45450\\
V_\infty(2,12,3)=669666\\
V_\infty(2,14,3)=9747523\\
V_\infty(2,16,3)=154700569\\
V_\infty(2,18,3)=2502207156\\
V_\infty(2,20,3)=40043708244\\
V_\infty(2,22,3)=632349938520\\
V_\infty(2,24,3)=9986116318524\\
V_\infty(2,26,3)=158192179607364\\
V_\infty(2,28,3)=2509767675626581\\
V_\infty(2,30,3)=39796612230719845\\
V_\infty(2,32,3)=630688880128338378\\
V_\infty(2,34,3)=9994168619297530758\\
V_\infty(2,36,3)=158396161513685960664\\
V_\infty(2,38,3)=2510580301930785916566\\
V_\infty(2,40,3)=39792149406721332018414\\
\\
V_\infty(2,2,4)=1\\
V_\infty(2,4,4)=6\\
V_\infty(2,6,4)=90\\
V_\infty(2,8,4)=2520\\
V_\infty(2,10,4)=113400\\
V_\infty(2,12,4)=3540600\\
V_\infty(2,14,4)=88610850\\
V_\infty(2,16,4)=2044242426\\
V_\infty(2,18,4)=47806940971\\
V_\infty(2,20,4)=1196081134201\\
V_\infty(2,22,4)=30647443460124\\
V_\infty(2,24,4)=784921116539484\\
V_\infty(2,26,4)=19899840884886720\\
V_\infty(2,28,4)=500019936693729120\\
V_\infty(2,30,4)=12551808236761063440\\
V_\infty(2,32,4)=315694279415609776404\\
V_\infty(2,34,4)=7955400980632212027852\\
V_\infty(2,36,4)=200622722060793477132937\\
V_\infty(2,38,4)=5057787000067792980984649\\
V_\infty(2,40,4)=127452627155747602225756890\\
\\
V_\infty(2,2,5)=1\\
V_\infty(2,4,5)=6\\
V_\infty(2,6,5)=90\\
V_\infty(2,8,5)=2520\\
V_\infty(2,10,5)=113400\\
V_\infty(2,12,5)=7484400\\
V_\infty(2,14,5)=361859400\\
V_\infty(2,16,5)=14091630840\\
V_\infty(2,18,5)=489147860970\\
V_\infty(2,20,5)=16420511188146\\
V_\infty(2,22,5)=563209318269379\\
V_\infty(2,24,5)=20416518083009593\\
V_\infty(2,26,5)=758713036253909844\\
V_\infty(2,28,5)=28351365170599079604\\
V_\infty(2,30,5)=1054143198114097909680\\
V_\infty(2,32,5)=38864351069181445164480\\
V_\infty(2,34,5)=1423417411123883479886400\\
V_\infty(2,36,5)=52064892889568503574209920\\
V_\infty(2,38,5)=1906534315066176639758670480\\
V_\infty(2,40,5)=69931615009402042606373019804\\
\\
V_\infty(3,3,2)=1\\
V_\infty(3,6,2)=20\\
V_\infty(3,9,2)=1680\\
V_\infty(3,12,2)=61340\\
V_\infty(3,15,2)=1886431\\
V_\infty(3,18,2)=69496201\\
V_\infty(3,21,2)=2568223000\\
V_\infty(3,24,2)=91712960320\\
V_\infty(3,27,2)=3290467596440\\
V_\infty(3,30,2)=118724053748417\\
V_\infty(3,33,2)=4276273204804217\\
V_\infty(3,36,2)=153904262366842444\\
V_\infty(3,39,2)=5541519231941145440\\
V_\infty(3,42,2)=199545071017172522244\\
V_\infty(3,45,2)=7184755645113714298863\\
V_\infty(3,48,2)=258691998154725997048673\\
V_\infty(3,51,2)=9314545233907934721851472\\
V_\infty(3,54,2)=335381528796576643131475840\\
V_\infty(3,57,2)=12075785123501322139824319056\\
V_\infty(3,60,2)=434802491356562053648077727185\\
\\
V_\infty(3,3,3)=1\\
V_\infty(3,6,3)=20\\
V_\infty(3,9,3)=1680\\
V_\infty(3,12,3)=369600\\
V_\infty(3,15,3)=41480880\\
V_\infty(3,18,3)=3422150780\\
V_\infty(3,21,3)=276888204387\\
V_\infty(3,24,3)=25512718688405\\
V_\infty(3,27,3)=2418264595619240\\
V_\infty(3,30,3)=225661997838758560\\
V_\infty(3,33,3)=20649533952628896000\\
V_\infty(3,36,3)=1889648253594082624960\\
V_\infty(3,39,3)=173699198403114756474600\\
V_\infty(3,42,3)=16001577154624484682748453\\
V_\infty(3,45,3)=1472965856766989578006355117\\
V_\infty(3,48,3)=135481185586476496195656612044\\
V_\infty(3,51,3)=12459839493182349378716705969200\\
V_\infty(3,54,3)=1146141579672729885487800599057600\\
V_\infty(3,57,3)=105440511941055519854115528116882480\\
V_\infty(3,60,3)=9699923367172090411762252385134967844\\
\\
V_\infty(4,4,2)=1\\
V_\infty(4,8,2)=70\\
V_\infty(4,12,2)=34650\\
V_\infty(4,16,2)=5562130\\
V_\infty(4,20,2)=708212251\\
V_\infty(4,24,2)=114774147001\\
V_\infty(4,28,2)=18679465660540\\
V_\infty(4,32,2)=2906167849870600\\
V_\infty(4,36,2)=454904037056013460\\
V_\infty(4,40,2)=71729455730285511001\\
V_\infty(4,44,2)=11285129375761977675001\\
V_\infty(4,48,2)=1773699532985462649188410\\
V_\infty(4,52,2)=278931562239767189408085850\\
V_\infty(4,56,2)=43869015908453746845566145990\\
V_\infty(4,60,2)=6898693708786029238293860809251\\
V_\infty(4,64,2)=1084865341390442288732669957148001\\
V_\infty(4,68,2)=170605963060816377946936433265175680\\
V_\infty(4,72,2)=26829411396875692269491197638918648400\\
V_\infty(4,76,2)=4219165662049303123773116859323196816720\\
V_\infty(4,80,2)=663502408038018748448058464247159216890001\\
\\
V_\infty(5,5,2)=1\\
V_\infty(5,10,2)=252\\
V_\infty(5,15,2)=756756\\
V_\infty(5,20,2)=549676764\\
V_\infty(5,25,2)=298227062281\\
V_\infty(5,30,2)=218838390759073\\
V_\infty(5,35,2)=161446400503248672\\
V_\infty(5,40,2)=112632613848657302400\\
V_\infty(5,45,2)=79169699996993643966432\\
V_\infty(5,50,2)=56151546386557366024202177\\
V_\infty(5,55,2)=39717291593245217794329362081\\
V_\infty(5,60,2)=28058660061656964336359435570604\\
V_\infty(5,65,2)=19835819533825566529982592591911412\\
V_\infty(5,70,2)=14024417724324420598672399947721245804\\
V_\infty(5,75,2)=9914206081036463014882722168252570938889\\
V_\infty(5,80,2)=7008596284293402975749309111124669929079521\\
V_\infty(5,85,2)=4954676885097638926007640423100194180529855296\\
V_\infty(5,90,2)=3502659589845301193905028251874353899223998638208\\
V_\infty(5,95,2)=2476160267409321946445662150301548547825713614803904\\
V_\infty(5,100,2)=1750492069977099993617695861204414333904857504132837761$
\end{spacing}
}

\section*{Conclusion}
We extend Schwartz's result \cite{Sch09} to solve the ball size of frequency permutations under Chebyshev distance. We also offer another efficient algorithm for the cases of larger frequency, larger minimum distance and smaller symbol set.

\end{document}